\documentclass[conference]{IEEEtran}
\IEEEoverridecommandlockouts
\usepackage{cite}
\usepackage{amsmath,amssymb,amsfonts}
\usepackage{algorithmic}
\usepackage{graphicx}
\usepackage{textcomp}
\usepackage{xcolor}
\usepackage{amsthm}
\newtheorem{theorem}{Theorem}
\newtheorem{proposition}{Proposition}

\def\BibTeX{{\rm B\kern-.05em{\sc i\kern-.025em b}\kern-.08em
    T\kern-.1667em\lower.7ex\hbox{E}\kern-.125emX}}
\begin{document}

\title{Optimal UAV Deployment for Rate Maximization in IoT Networks\vspace{-10pt}
}

\author{Maryam Shabanighazikelayeh and Erdem Koyuncu \\ 
Department of Electrical and Computer Engineering, University of Illinois at Chicago\vspace{-15pt}} 



\maketitle

\begin{abstract}
We consider multiple unmanned aerial vehicles (UAVs) at a common altitude serving as data collectors to a network of IoT devices. First, using a probabilistic line of sight channel model, the optimal assignment of IoT devices to the UAVs is determined. Next, for the asymptotic regimes of a large number of UAVs and/or large UAV altitudes, we propose closed-form analytical expressions for the optimal data rate and characterize the corresponding optimal UAV deployments. We also propose a simple iterative algorithm to find the optimal deployments with a small number of UAVs at high altitudes. Globally optimal numerical solutions to the general rate maximization problem are found using particle swarm optimization.
\end{abstract}

\begin{IEEEkeywords}
UAV-aided communications, rate maximization.\vspace{-10pt} 
\end{IEEEkeywords}

\section{Introduction}
Unmanned aerial vehicles (UAVs) have been recently utilized in a variety of applications. For example, UAVs can serve as base stations providing service for mobile users \cite{myoutage, myquant, kalant}. A similar 
use case is UAVs as data collection units \cite{Flight,striking}, especially in the context of Internet of Things (IoT) applications \cite{data,IoTtime,IoTNN}. In \cite{IoTtime}, the authors study UAVs as data collectors from time-constrained IoT devices for
offloading excessive traffic of existing wireless networks. Another example is \cite{Uavvr}, which investigates UAV-IoT data capture and networking for remote scene virtual reality immersion.



Energy efficiency is a fundamental issue in UAV-aided IoT networks as both the UAVs and the IoT devices typically have severe battery and power limitations. Several solutions have thus been proposed to address the energy efficiency challenges of UAV-aided IoT networks \cite{IoTMoz,Rechargable,striking, anotheriot, Erelay}. In particular, \cite{striking} studies the tradeoffs between the energy efficiency of the ground IoT sensors and the overall system throughput by optimizing various system parameters including the UAV flying speeds and altitudes. In \cite{anotheriot}, the authors consider the hovering altitude and power allocation problem for a three tier network consisting of satellites, UAVs, and the IoT devices. The power efficiency provided by multiple UAV relays between a density of IoT devices and base stations is studied in \cite{Erelay}.

Trajectory optimization and optimal deployment
of UAVs is another important problem in designing UAV-aided systems\cite{myquant,IoTtime,3D, Online,CoMP,PIMRC}. In general, this class of problems are non-convex
optimization problems in which dimensionality increases with the number of UAVs. Hence, providing a globally optimal solution is very challenging. Several different optimization methods have  been proposed, including evolutionary algorithms \cite{PSOnew,myoutage}. In \cite{myquant}, the authors propose a quantization theory approach to solve the deployment and trajectory optimization problem. However, the used communication model is a line of sight (LOS) model and does not consider the non line of sight (NLOS) effects \cite{Alhourani}. We refer to \cite{koyuncu1, koyuncu2} for other applications of quantization theory to the deployment of non-UAV networks. In \cite{Rechargable}, the authors consider a cooperative approach to provide coverage and
long term information services for IoT nodes in UAV-aided networks. The authors divide the original non-convex problem into three subproblems and use a block coordinate descent-based iterative
algorithm to solve mentioned subproblems. In \cite{IoTtime}, the authors jointly optimize the UAV trajectory and
the radio resource allocation to serve the maximum number of IoT devices. Globally optimal solutions are found for small scale scenarios using the branch, reduce and
bound algorithm, and suboptimal algorithms are developed for larger scale scenarios.


Most of the previous works rely on a numerical approach to solve the UAV deployment problems in IoT networks. In addition, in some works, the communication model is too simple and does not capture NLOS attenuation. In this work, we consider a probabilistic LOS model and formulate the rate maximization problem accordingly. We find the optimal assignment of the IoT nodes to the data collector UAVs. In addition, for the asymptotic regimes of either a large number of UAVs or large UAV altitudes, we find the optimal deployment of UAVs, and the corresponding optimal data rates. We also verify our analysis with numerical simulations conducted using the particle swarm optimization (PSO) algorithm.

The rest of this paper is organized as follows: In Section \ref{sec1}, we introduce the system model. In Section
\ref{sec2}, we study the optimal assignment of IoT nodes to their UAVs. We also present our asymptotic analysis on the optimal placement of UAVs and corresponding data rates.  In Section  \ref{sec4}, we present the numerical
simulation results. Finally, in Section \ref{sec5}, we draw our
main conclusions and discuss future work. Some of the
technical proofs are provided in the appendices.

\section{System Model and Problem Formulation}
\label{sec1}
Let $q$ be the location of an IoT device in the $d$-dimensional Euclidean space $\mathbb{R}^d$ where $d \in \{1,2\}$. Also, let $x_i$ be the projection of UAV location on $\mathbb{R}^d$, and $h$ denote a common altitude for the UAVs. In this work, we adopt the probabilistic LOS model for the UAVs, as presented in \cite{Alhourani}. According to this model, there can be LOS communication between UAV $i$ at $(x_i,h)$  and the IoT device at $q$ with a certain probability $P_{LOS}$. Otherwise, the IoT-to-UAV link can only support NLOS communication with probability of $P_{NLOS} = 1-P_{LOS}$. The LOS probability $P_{LOS}$ has an explicit dependence on the distances as defined through
\begin{equation}
\label{plos}
P_{LOS}(\|x_i-q\|) \triangleq \frac{1}{1+ce^{-b(\mathrm{tan}^{-1}(\frac{h}{\|x_i-q\|})-c)}},
\end{equation}
where $b$ and $c$ are environment-dependent parameters. An example scenario consisting of one IoT device communicating with two UAVs is illustrated in Fig. \ref{sysmodel}.

Once again following \cite{Alhourani}, let us assume that the NLOS path incurs an extra attenuation of $\delta$ compared to the LOS path, where $0 < \delta < 1$. In such a scenario, using Shannon's well-known capacity formula for the Gaussian channel, the achievable data rate between the  IoT device at $q$ and the UAV at $x_i$ can be epxressed as
\begin{multline}
\label{rate}
\!\!R_i(q) =
\mathrm{log}_2\!\left(\!1\!+\!\frac{\rho A}{N_0 (\|x_i-q\|^2+h^2)^{r/2}}\right) \!P_{LOS}(\|x_i-q\|) \\
 + \mathrm{log}_2\!\left(1\!+\!\frac{\rho A\delta}{N_0(\|x_i-q\|^2+h^2)^{r/2}}\right) \!P_{NLOS}(\|x_i\!-\!q\|),\!
\end{multline}
where $\rho$ is the fixed power of IoT devices,  $r$ is the path loss exponent, $N_0$ is the noise power, and $A$ is a constant which depends on the system parameters such as operation frequency and antenna gain \cite{Azari2018}. Obviously, it is optimal for each IoT device to connect to the UAV that will maximize its data rate. In other words, an IoT device at location $q$ should be connected to the UAV with index
\begin{align}
\label{bestindex} I^{\star}(q) \triangleq \arg\max_i R_i(q)
\end{align}


The maximum data rate that can be provided to the IoT device is then $\max_i R_i(q)$. Suppose now that the IoT devices are distributed over the area of interest according to a certain density function $f(q)$, where $\int_{\mathbb{R}^d} f(q)\mathrm{d}q = 1$. 
Averaging out the maximum data rate of an IoT device $\max_i R_i(q)$ over the IoT device density $f$, the maximum achievable data rate between the IoT devices and the UAVs are given by
\begin{multline}
\label{meanrate}
R(X,f) = \\\int_{\mathbb{R}^{d}}\mathrm{max}_{i}\biggl[\mathrm{log}_2\biggl(1+\frac{\gamma}{(\|x_i-q\|^2+h^2)^{r/2}}\biggr) P_{LOS}(\|x_i-q\|)  \\ \!\!\!+\!
\mathrm{log}_2\biggl(\!1\!+\!\frac{\gamma \delta}{(\|x_i\!-\!q\|^2\!+\!h^2)^{r/2}}\!\!\biggr)\! P_{NLOS}(\|x_i\!-\!q\|)\biggl] \!f(q) \mathrm{d}q,\!\!\!
\end{multline}
where $\gamma \triangleq \frac{A \rho}{N_0}$ and $X = [x_1\,\,x_2 \cdots x_n]$ is the UAV deployment. The goal of this paper is to find the optimal deployment $X$ such that $R(X,f)$ is maximized. In other words, we wish to find  the solution to the following optimization problem:
\begin{equation}
\label{objectivefuncDep}
X^{\star} = [x_1^{\star} \,\, x_2^{\star}\, \cdots\, x_n^{\star}] =  \mathrm{argmax}_{\mathrm{X}} R(X,f)
\end{equation}
In the following, we first determine an explicit expression for the optimal UAV assignment to each IoT device (\ref{bestindex}). We will then focus on the asymptotic regimes of a large number of UAVs or high UAV altitudes to analytically solve the optimal UAV deployment problem as given by (\ref{objectivefuncDep}).


\begin{figure}
\center
\includegraphics[scale=.18]{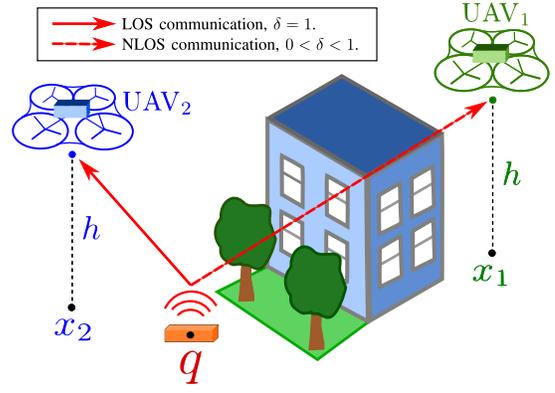}\vspace{-7pt}
\caption{An IoT device communicating with two UAVs over possible LOS and NLOS channels.}
\label{sysmodel}
\end{figure}


\section{Optimal Placement of UAVs}
\label{sec2}
In this section, we present our main analytical results. We first determine the optimal assignment of IoT devices to UAVs. To gain initial insight on this problem, first consider the simple scenario of a pure LOS model, where we consider $P_{LOS} = 1$ and $P_{NLOS} = 0$, independently of the locations of the UAVs and the IoT device. In this case, according to (\ref{rate}) and (\ref{bestindex}), the optimal UAV assignment evaluates to 
\begin{align}
I^{\star}(q) & = \arg\max_i  \mathrm{log}_2\biggl(1+\frac{\gamma}{(\|x_i-q\|^2+h^2)^{r/2}}\biggr) \\
\label{nearestuav} & = \arg\min_i \|q - x_i\| \quad (P_{LOS} = 1,\,P_{NLOS} = 0).
\end{align}
In other words, each IoT device should be connected to its closest UAV. However, in our probabilistic LOS model, the same conclusion cannot be reached immediately, due to the non-trivial dependence of the LOS probabilities and the rate expressions on the IoT-to-UAV distances. Nevertheless, connecting each IoT device to its closest UAV, i.e., the assignment rule in (\ref{nearestuav}) still turns out to be optimal in the case of the probabilistic LOS model, as the following proposition shows.



\begin{proposition}
\label{prop1}
With the probabilistic LOS model, the maximum rate is achieved when each IoT device is connected to the closest UAV. In other words, $I^{\star}(q) = \arg\min_i \|q - x_i\|$.
\end{proposition}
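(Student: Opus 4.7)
The plan is to show that $R_i(q)$ depends on the pair $(x_i, q)$ only through the horizontal distance $u \triangleq \|x_i - q\|$, and that as a function of $u$ it is monotonically decreasing on $[0,\infty)$. The distance-only dependence is immediate from (\ref{rate}). Writing $R_i(q) = R(u)$ with
$$R(u) = B(u) + \bigl(A(u)-B(u)\bigr)P_{LOS}(u),$$
where $A(u) \triangleq \log_2(1+\gamma/(u^2+h^2)^{r/2})$ and $B(u) \triangleq \log_2(1+\gamma\delta/(u^2+h^2)^{r/2})$, the desired conclusion $I^{\star}(q) = \arg\min_i \|q - x_i\|$ will follow at once from monotonicity of $R$.

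I would establish monotonicity through three sub-claims: (a) $B(u)$ is decreasing; (b) $P_{LOS}(u)$ is nonnegative and decreasing; and (c) $A(u) - B(u)$ is nonnegative and decreasing. Granted these, $(A-B)P_{LOS}$ is a product of two nonnegative decreasing functions, whose derivative $(A-B)'P_{LOS} + (A-B)P_{LOS}'$ is a sum of two nonpositive terms and hence nonpositive. Combined with (a), $R$ is a sum of decreasing functions and therefore decreasing.

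Sub-claim (a) is essentially by inspection, obtained by writing $B$ as a composition of monotone maps $u\mapsto u^2+h^2 \mapsto (u^2+h^2)^{r/2}\mapsto 1+\gamma\delta/(\cdot)\mapsto \log_2(\cdot)$. Sub-claim (b) follows from (\ref{plos}): as $u$ grows, $\tan^{-1}(h/u)$ shrinks, so $-b(\tan^{-1}(h/u)-c)$ grows, the exponential term in the denominator grows, and $P_{LOS}$ shrinks (under the standard assumption $b, c > 0$ for the environment parameters).

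The crux is sub-claim (c), which I expect to be the only nontrivial step. My approach would be to substitute $y = \gamma/(u^2+h^2)^{r/2}$, which maps $u \in [0,\infty)$ monotonically onto $y \in (0, \gamma/h^r]$, and rewrite $A(u) - B(u) = \log_2\frac{1+y}{1+\delta y}$. A direct calculation gives
$$\frac{d}{dy}\log_2\frac{1+y}{1+\delta y} = \frac{1-\delta}{(\ln 2)(1+y)(1+\delta y)} > 0,$$
since $0 < \delta < 1$. Hence $A-B$ is strictly increasing in $y$, and therefore strictly decreasing in $u$; nonnegativity is immediate from $\delta < 1$. The intuition, which this calculation formalizes, is that the LOS--NLOS rate gap is maximal at short distances where the SNR is high and shrinks to zero as $u \to \infty$, so re-weighting the two terms by a decreasing $P_{LOS}$ cannot reverse the overall monotonicity. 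The main subtlety is precisely this gap-monotonicity claim, which requires exploiting $0<\delta<1$ quantitatively rather than just in sign; once it is in hand, the rest of the argument is bookkeeping.
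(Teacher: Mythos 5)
Your proof is correct and follows essentially the same route as the paper: the decomposition $R = B + (A-B)P_{LOS}$ is exactly the paper's $R_i = L_2 + L_1 P_{LOS}$, with the argument reducing to the monotonicity of the same three pieces. The only cosmetic difference is that you verify that $A-B$ is decreasing by a change of variables and a derivative, whereas the paper rewrites it as $\log_2\bigl(1+\frac{\gamma(1-\delta)}{(d^2+h^2)^{r/2}+\gamma\delta}\bigr)$ so the monotonicity is visible by inspection.
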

\begin{proof}
Let $d = \|q - x_i\|$. According to (\ref{rate}), we have
\begin{align}
\nonumber R_i(q) & = \mathrm{log}_2\left(1+\frac{\gamma}{(d^2+h^2)^{r/2}}\right) P_{LOS}(d)  + \\ & \label{qpweuqpoweuq1} \qquad \mathrm{log}_2\left(1+\frac{\gamma\delta}{(d^2+h^2)^{r/2}}\right) (1- P_{LOS}(d)) \\ \label{qpweuqpoweuq2} & =  
L_1(d) P_{LOS}(d) + L_2(d),
\end{align}
where
\begin{align}
L_{1}(d) & = \mathrm{log}_2\left(1+\frac{\gamma(1-\delta)}{(d^2+h^2)^{r/2}+\gamma\delta}\right), \\
L_{2}(d) & =  \mathrm{log}_2\left(1+\frac{\gamma\delta}{(d^2+h^2)^{r/2}}\right) 
\end{align}
and $P_{LOS}(d)$ is as defined in (\ref{plos}). The equality of (\ref{qpweuqpoweuq1}) and (\ref{qpweuqpoweuq2}) can be verified through straightforward algebraic manipulations.  The result then follows as $L_1, P_{LOS}, L_2$ are all monotonically decreasing functions of their arguments.
\end{proof}

Now, let $\nu_i = \{q : \|x_i-q\| \leq \|x_j-q\|, \forall j \neq i
\}$ denote the Voronoi region corresponding to UAV $i$. Then, according to Proposition \ref{prop1}, the IoT device $q\in \nu_i$ should be connected to UAV $i$ to maximize the average data rate. 

We can now optimize the UAV deployment and determine the corresponding best possible average IoT data rates. Our main result in this context is  the following theorem.

\begin{theorem}
\label{thm2}
For asymptotically large UAV altitudes $h$ and/or a large number of UAVs, the optimal deployment of UAVs is derived by solving the following optimization problem:
\begin{align}
\label{objectiveNew}
X^{\star} & \textstyle  = \mathrm{argmin}_{\mathrm{X}} \int \mathrm{min}_{i}\|x_i-q\| f(q) \mathrm{d}q \\  & \textstyle =\mathrm{argmin}_{\mathrm{X}}  \sum_{i=1}^{n} \int_{\nu_i} \|x_i-q\| f(q) \mathrm{d}q
\end{align}
The corresponding optimal data rate is
\begin{multline}
\label{optLargeN}
R(X^{\star},f) = \mathrm{log}_2\left(1+\frac{\gamma}{h^r}\right)\frac{1}{1+c'}+\mathrm{log}_2\left(1+\frac{\gamma \delta}{h^r}\right)\frac{c'}{1+c'} \\
- \frac{bc'}{h(1+c')^2}\mathrm{log}_2\left(\frac{\gamma \delta+h^r}{\gamma + h^r}\right) \sum_{i=1}^{n} \int_{\nu_i}\|x_i^{\star}-q\| f(q)  \mathrm{d}q \\
+ \mathrm{log}_2\left(\frac{\gamma \delta+h^r}{\gamma + h^r}\right) \sum_{i=1}^{n} \int_{\nu_i}o\left(\frac{\|x_i^{\star}-q\|}{h}\right) f(q)  \mathrm{d}q,
\end{multline}
where $c' \triangleq c e^{-b(\frac{\pi}{2}-c)}$.
\end{theorem}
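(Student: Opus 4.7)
The plan is to reduce the rate-maximization problem to the classical weighted-distance objective $\min_X \sum_i \int_{\nu_i}\|x_i-q\|\,f(q)\,\mathrm{d}q$ via a Taylor expansion of $R_i(q)$ in the small parameter $d/h$, where $d=\|x_i-q\|$. The key point is that in either asymptotic regime (large $h$ at fixed $n$, or large $n$ at fixed $h$), this ratio is uniformly small on the effective support, so the first-order expansion becomes sharp in the limit.

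First, Proposition \ref{prop1} gives $R(X,f)=\sum_i\int_{\nu_i}R_i(q)\,f(q)\,\mathrm{d}q$ with $R_i(q)=L_1(d)P_{LOS}(d)+L_2(d)$. Using $\tan^{-1}(h/d)=\pi/2-d/h+O((d/h)^3)$ in (\ref{plos}) yields
\begin{equation*}
P_{LOS}(d)=\frac{1}{1+c'}-\frac{bc'}{(1+c')^2\,h}\,d+o(d/h),
\end{equation*}
and since $L_1,L_2$ depend on $d$ only through $d^2+h^2$, their first derivatives at $d=0$ vanish, so $L_j(d)=L_j(0)+o(d/h)$. Multiplying out,
\begin{equation*}
R_i(q)=\frac{L_1(0)}{1+c'}+L_2(0)-\frac{bc'\,L_1(0)}{(1+c')^2\,h}\,d+o(d/h).
\end{equation*}

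Next, I would simplify the zeroth-order piece via the same algebraic identity used in the proof of Proposition \ref{prop1} to obtain $\frac{1}{1+c'}\log_2(1+\gamma/h^r)+\frac{c'}{1+c'}\log_2(1+\gamma\delta/h^r)$, and then rewrite $L_1(0)=\log_2((h^r+\gamma)/(h^r+\gamma\delta))$ in the ratio form required by the theorem. Integrating over the Voronoi cells and summing gives (\ref{optLargeN}). Since the linear coefficient of $d$ in $R_i(q)$ is strictly negative and of order $1/h$, asymptotic maximization of $R(X,f)$ is equivalent to asymptotic minimization of $\sum_i\int_{\nu_i}\|x_i-q\|\,f(q)\,\mathrm{d}q$, which is (\ref{objectiveNew}).

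The main obstacle is making the remainder $o(d/h)$ uniform in $q$ over the support of $f$ and then arguing that the argmin of the truncated (linear-in-$d$) functional agrees, at the stated order, with the argmax of the full $R(X,f)$. For $h\to\infty$ one can impose, for example, that $f$ has compact support so that $\sup_{q\in\mathrm{supp}(f)}d/h\to 0$. For $n\to\infty$ one instead invokes a standard quantization-theoretic estimate showing that the typical cell diameter of an optimal deployment decays like $n^{-1/d}$, so $d/h$ is small on a set of full $f$-measure. Ensuring that the linear term genuinely dominates the $o(d/h)$ remainder after integration, so that pathological deployments placing non-negligible $f$-mass far from every UAV can be excluded, is the delicate step and where I would concentrate most of the effort.
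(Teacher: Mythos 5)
Your proposal follows essentially the same route as the paper's Appendix~\ref{appthm2}: linearize $P_{LOS}$ via $\tan^{-1}(h/d)=\pi/2-d/h+o(d/h)$, note that the logarithmic rate factors depend on $d$ only through $d^2+h^2$ and hence contribute only $o(d/h)$ corrections, retain the linear-in-$d$ term, and average over $f$; your explicit treatment of the uniformity of the remainder (compact support for $h\to\infty$, cell-diameter decay for $n\to\infty$) is the only place you go beyond the paper, which simply asserts $d\simeq 0$ in an optimal deployment. One remark: the linear coefficient you obtain, $-\frac{bc'L_1(0)}{(1+c')^2h}$ with $L_1(0)=\log_2\frac{h^r+\gamma}{h^r+\gamma\delta}>0$, is negative, which is exactly what is needed for maximizing $R(X,f)$ to be equivalent to the minimization in (\ref{objectiveNew}); the sign as printed in (\ref{optLargeN}) (and in the appendix) appears to be flipped, so your version is the internally consistent one.
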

\begin{proof}
See Appendix \ref{appthm2}.
\end{proof}

An interesting byproduct of Theorem \ref{thm2} is that for large number of UAVs and/or arbitrary number of UAVs at high altitudes, the optimal placement is derived from (\ref{objectiveNew}) which is independent of $h$ and $\delta$. Hence, the optimal placement in the mentioned asymptotic regimes is not a function of altitude or attenuation. In addition, the problem of finding the optimal deployment is reduced to solving (\ref{objectiveNew}), for which many methods and results are already available, especially from the quantization theory literature. Once a solution to (\ref{objectiveNew}) is obtained, it can be substituted to (\ref{optLargeN}) to obtain an asymptotically tight expression for the data rates. We now discuss two methods to solve (\ref{objectiveNew}). The first theoretical method provides an analytical solution for the asymptotic regime of a large number of UAVs. The second numerical method will be applicable to any number of UAVs.


\subsection{Quantization Theory Approach}
We first present an analytical approach to solve (\ref{objectiveNew}). We note that  (\ref{objectiveNew}) can be interpreted as the average $\ell_1$-norm distortion of a quantizer with reproduction points $x_1,\ldots,x_n$ for a given source density $f$ \cite{myquant}. As $n\rightarrow\infty$, the optimal UAV deployment in (\ref{objectiveNew}) can be characterized in terms of a density function of UAVs, rather than the individual locations of each UAV. To that end, consider a point density function $\lambda(q)$ such that the cube $[q, q + \mathrm{d}q]$ of volume $dq$ contains $n\lambda(q) dq$ reproduction points (UAVs) with $\int_{\mathbb{R}^d} \lambda(q) \mathrm{d}q = 1$. According to the classical results of quantization theory \cite{zador, glquant}, the optimal point (UAV) density function is as follows:
\begin{equation}
\label{optimalX}
\mathrm{\lambda^{\star}}(q,f) = \textstyle f^{\frac{d}{d+1}}(q) / \int_{R^d} f^{\frac{d}{d+1}}(q') dq'
\end{equation}
Hence, as ${n\to\infty}$, for any $q$, the infinitesimal $[q, q + dq]$ should contain $n \mathrm{\lambda^{\star}}(q,f) dq$ UAVs in an optimal deployment. Furthermore, also using the results in \cite{zador, glquant}, the corresponding optimal value of (\ref{objectiveNew}) can be derived in closed-form as 
\begin{equation}
\label{opt1}
\mathrm{min}_{\mathrm{X}} \! \textstyle\int \mathrm{min}_{i}\|x^{\star}_i-q\| f(q) dq \!=\! k_d n^{-\frac{1}{d}} \|f \|_{\frac{d}{d+1}} \!+\! o(n^{-\frac{1}{d}}),
\end{equation}
where $\|f \|_{\alpha} \triangleq (\int_{\mathbb{R}^d} (f(q))^{\alpha} dq)^{\frac{1}{\alpha}}$ is the $\alpha$-norm of the density f and $k_1$ and $k_2$ are the normalized first moments of the origin-centered interval and the origin-centered regular hexagon, respectively. The normalized $\ell$th moment of an arbitrary origin-centered $A\subset \mathbb{R}^d$ is defined as
\begin{equation}
 m(A) \triangleq \textstyle \int_{A} \|q\|^{\ell} dq /  (\int_{A} dq)^\frac{d+\ell}{d}.
\end{equation}
In particular, for the interval and the regular hexagon, which correspond to the optimal Voronoi cell shapes in one and two dimensions respectively, the normalized first moments can be calculated to be $k_1 = \frac{1}{4}$ and $k_2 = \frac{4+\log 27}{12^{\frac{3}{4}}3}$, respectively. 


Equation (\ref{opt1}) provides a complete asymptotic characterization of the achievable date rate for Theorem \ref{thm2}, because the closed forms of (\ref{optLargeN}) are immediately calculated by substituting the optimal value of $\mathrm{min}_{\mathrm{X}} \int \mathrm{min}_{i}\|x^{\star}_i-q\| f(q) dq$ from (\ref{opt1}) to (\ref{optLargeN}). The final result is summarized via the following theorem.

\begin{theorem}
\label{thm3}
For an asymptotically large number of UAVs, the optimal UAV point density function that maximizes the data rate is given by (\ref{optimalX}). The corresponding optimal data rate is 
\begin{multline}
\label{optLargefinal}
R(X^{\star},f) = \mathrm{log}_2\left(1+\frac{\gamma}{h^r}\right)\frac{1}{1+c'}+\mathrm{log}_2\left(1+\frac{\gamma \delta}{h^r}\right)\frac{c'}{1+c'} \\
\!\!- \frac{bc'}{h(1\!+\!c')^2}\mathrm{log}_2\!\left(\frac{\gamma \delta\!+\!h^r}{\gamma\! +\! h^r}\right) \!\!\left[k_d n^{-\frac{1}{d}} \|f \|_{\frac{d}{d+1}}\!\! +\! o(n^{-\frac{1}{d}})\right]\!.\!\!
\end{multline}
\end{theorem}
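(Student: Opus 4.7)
The plan is to combine Theorem~\ref{thm2}, which has already done the analytical heavy lifting, with the classical Zador--Gersho asymptotics for $\ell_1$ quantization that were summarized earlier in (\ref{optimalX}) and (\ref{opt1}).

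First, I would invoke Theorem~\ref{thm2}: in the large-$n$ regime, the optimal deployment is any minimizer of the $\ell_1$-distortion functional in (\ref{objectiveNew}), and the attained rate has the form (\ref{optLargeN}). The crucial structural observation is that the only $X$-dependent quantity in (\ref{optLargeN}) is the scalar $D(X) \triangleq \sum_i \int_{\nu_i}\|x_i - q\|\,f(q)\,\mathrm{d}q$, multiplied by coefficients depending on $h,\gamma,\delta,b,c$ but not on $X$. Hence the rate-optimal deployment coincides (asymptotically) with the $\ell_1$-distortion-optimal one, and identifying the latter reduces the theorem to a substitution.

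Second, I would cite the quantization-theoretic facts already stated in the paper: as $n \to \infty$, the optimal point density for $\ell_1$ distortion against source $f$ is proportional to $f^{d/(d+1)}$, normalized as in (\ref{optimalX}), and the minimum distortion behaves as $k_d\, n^{-1/d}\,\|f\|_{d/(d+1)} + o(n^{-1/d})$, as in (\ref{opt1}). This immediately supplies the claimed optimal UAV density.

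Finally, I would plug $D(X^\star) = k_d\, n^{-1/d}\,\|f\|_{d/(d+1)} + o(n^{-1/d})$ from (\ref{opt1}) into (\ref{optLargeN}); after collecting the coefficient $-\frac{bc'}{h(1+c')^2}\log_2\!\bigl(\frac{\gamma\delta+h^r}{\gamma+h^r}\bigr)$, this mechanically produces (\ref{optLargefinal}). The only step requiring real care, and the one I expect to be the main obstacle, is handling the residual $\sum_i \int_{\nu_i} o(\|x_i^\star - q\|/h)\,f(q)\,\mathrm{d}q$ carried along from Theorem~\ref{thm2}: under the optimal placement the Voronoi cells shrink uniformly at rate $n^{-1/d}$ on the support of $f$, so $\|x_i^\star - q\|/h$ tends to zero uniformly in $q$, and a dominated-convergence or uniform-remainder argument should show the whole residual is $o(n^{-1/d})$, to be absorbed into the error already present in (\ref{optLargefinal}).
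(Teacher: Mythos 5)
Your proposal matches the paper's own (very terse) argument: the paper proves Theorem~\ref{thm3} exactly by substituting the Zador-type asymptotic value (\ref{opt1}) and the optimal point density (\ref{optimalX}) into the rate expression (\ref{optLargeN}) from Theorem~\ref{thm2}. Your extra care in arguing that the residual $\sum_i \int_{\nu_i} o(\|x_i^\star - q\|/h) f(q)\,\mathrm{d}q$ is absorbed into the $o(n^{-1/d})$ term is a point the paper glosses over, but it is the same route.
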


This provides a complete asymptotic characterization of the rate for large number of UAVs. Unfortunately, the knowledge of the optimal density function of the UAVs does not immediately lead to the knowledge of the optimal discrete UAV locations. However, for the special case of one dimension, the optimal discrete placement of UAVs can also be approximated using a variant of inverse transform sampling \cite{myquant}: Let $X^{\star} = [x_1^{\star}\, x_2^{\star} \cdots x_n^{\star}]$ be the optimal deployment. Suppose $x_1^{\star} \leq x_1^{\star} \leq ... \leq x_n^{\star}$ without loss of generality. For $x\!\in\![0,1]$, let $\Lambda^{\star}_{inv}(x,f)$ be the unique real number that satisfies
\begin{equation}
\label{lambdaeq}
\textstyle \int_{0}^{\Lambda^{\star}_{inv}(x,f)} \lambda^{\star}(q,f) dq = x.
\end{equation}
 Then, $x_i^{\star}$ can be approximated as 
\begin{equation}
\label{xopt}
 \textstyle x_i^{\star} \simeq \Lambda^{\star}_{inv}\left(\frac{2i-1}{2n},f\right)
\end{equation}
Hence, to find the optimal placement of UAVs, we can first solve (\ref{lambdaeq}) for $\Lambda^{\star}_{inv}(x,f)$ and then use (\ref{xopt}) to calculate the optimal UAV locations. For two dimensions, or a non-asymptotic number of UAVs, we consider a numerical solution to (\ref{objectiveNew}). Details of the solution are described in what follows.


\subsection{Iterative Approach}
In this numerical approach to solving (\ref{objectiveNew}), the UAV locations $x_{1,0},...,x_{n,0}$ are first initialized randomly at Iteration $0$. We then perform the following procedure iteratively, essentially considering a generalized Lloyd algorithm \cite{lbgalgo} for the $\ell_1$-norm distortion measure. At Iteration $k$, where $k \geq 1$, we first calculate the Voronoi regions
\begin{equation}
\nu_{i,k} = \{q : \|x_{i,k-1}-q\| \leq \|x_{j,k-1}-q\|, \forall j \neq i
\},
\end{equation}
 Keeping the Voronoi regions fixed, we then solve the following optimization problem to update the optimal solution $X$: 
\begin{equation}
\label{iterativexopt}
X_{k} = \mathrm{argmin}_X \sum_{i=1}^{n} \int_{\nu_{i,k}} \mathrm \|x_{i}-q\| f(q)  \mathrm{d}q.
\end{equation}
Solving (\ref{iterativexopt}) is equivalent to solving the optimization problem 
\begin{equation}
\label{1234}
x_{i,k} = \mathrm{argmin}_{x_i} \int_{\nu_{i,k}} \mathrm \|x_{i}-q\| f(q)  \mathrm{d}q
\end{equation}
for each $i\in\{1,2,..., n\}$. The problem (\ref{1234}) is a convex optimization problem, as the objective function is the positive weighted summation of convex norms. Therefore, we can solve (\ref{1234}) by using any globally optimal approach such as gradient descent. Furthermore, for one dimension, we can provide a closed form solution for (\ref{1234}) by solving
\begin{equation}
\frac{\partial}{\partial x_i} \int_{\nu_{i,k}} \!\! \|x_{i}-q\| f(q)   \mathrm{d}q =
\int_{\nu_{i,k}}\!\! \mathrm{sign}(x_i-q) f(q) \mathrm{d}q = 0
\end{equation}
Solving for $x_i$, we obtain $x_{i,k} = \mathrm{median}(f_c(q))$, where $f_c(q) \triangleq \frac{f(q)}{\int_{\nu_{i,k}} f(q) dq},\; q \in \nu_{i,k}$.
 
Note that one can also attempt to directly solve (\ref{meanrate}) in an iterative fashion. The calculation of the Voronoi regions $\nu_{i,k}$ remains the same as it is optimal for each IoT device to be connected to its closest UAV. We can update the UAV locations as $x_{i,k} = \arg\max_{x_i} \int_{\nu_{i,k}} R_i(q) \mathrm{d}q$. The end result is an iterative ascent algorithm for the original objective function, which is very much desirable. On the other hand, the problem with this approach is that the optimization of $x_{i,k}$ still remains non-convex. The strength of our iterative approach stems from the fact that it convexifies the entire optimization, resulting in a very fast implementation. The numerical simulations in the next section also show that our  convexification approach results in only negligible loss of performance.

\section{Numerical Results}
\label{sec4}
In this section, we provide numerical simulation results that confirm our analytical findings. For a general approach that is applicable to all scenarios, we used the PSO
method \cite{PSO} to solve the optimization problem (\ref{objectivefuncDep}). 

The PSO method is a population-based iterative algorithm
for solving non-convex optimization problems. In general, population-based optimization algorithms such as PSO are known to outperform the simpler gradient
descent like approaches. Specifically, multiple candidate solutions (population agents)  helps to avoid
locally optimal solutions. This makes PSO-like algorithms particularly suitable for  multiple-UAV optimization problems \cite{myoutage} which are complicated non-convex problems in general. 

We provide simulation results to validate Theorems \ref{thm2} and \ref{thm3} by deriving the optimal solution of (\ref{objectiveNew}) using quantization theoretical and iterative approaches. We also investigate the effects of altitude and attenuation factors on the achievable rates. For our numerical simulations, we have used $b = 0.43$, $c = 4.88$, $\gamma = 50dB$, $r = 2$, unless specified otherwise. Also, in the figures, ``Quantization Theory approach'' refers to the results of Theorem \ref{thm3}, while ``Iterative approach'' refers to Theorem \ref{thm2} where the optimal deployment is calculated via the iterative algorithm  in Section III.B.



\begin{figure}[h]
\center
\includegraphics[scale=.55]{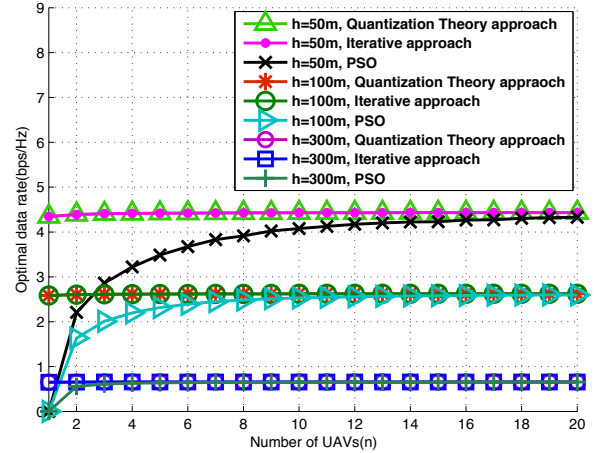}\vspace{-7pt}
\caption{Comparison of UAV deployment algorithms for a one-dimensional uniform density at different altitudes and $\delta = 0.5$.}
\label{1dimuni}
\end{figure}

\begin{figure}[h]
\center
\includegraphics[scale=.55]{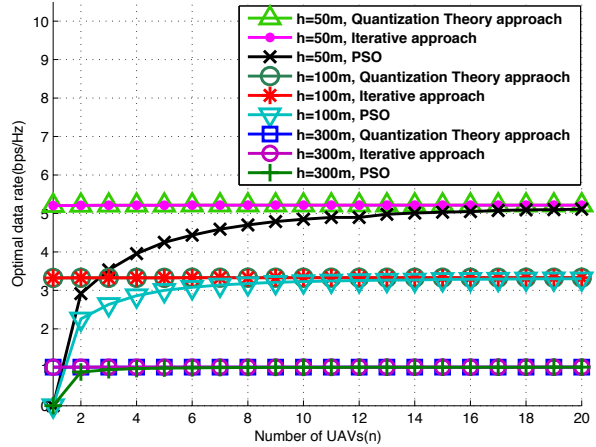}\vspace{-7pt}
\caption{Comparison of UAV deployment algorithms for a one-dimensional uniform density at different altitudes and $\delta = 0.9$.}
\label{1dimuni9}
\end{figure}

Fig. \ref{1dimuni} shows the optimal rate derived with Theorem \ref{thm2} in comparison with results provided by PSO method for different value of altitudes $h$ and different number of UAVs. The horizontal axis represents the number of UAVs, and the vertical axis represents the  data rate. One dimensional uniform density $f(q) = 10^{-3},\; q \in [0,1000]m$ is considered for IoT density. We can observe that for $n>3$ and $h=300m$ which can be considered as a relatively high altitude, the results of Theorem \ref{thm2}, which are applicable to high altitudes matches the exact results derived by solving the original optimization problem (\ref{objectivefuncDep}) using PSO. Furthermore, for a large number of UAVs and any altitude, Theorem \ref{thm3} provides almost the same results as the exact solution of (\ref{objectivefuncDep}). The mentioned scenarios confirm the accuracy of Theorem \ref{thm3}. 


A key observation from  Fig. \ref{1dimuni} is that the optimal data rate converges as the number of UAVs increases. This is more obvious for the case with $h= 300m$. Accordingly, we can conclude that adding more UAVs will not improve the system performance noticeably after some point which depends on the altitude. Specifically, as the altitude increases, the  optimal results are achievable with less number of UAVs.




In Fig. \ref{1dimuni9}, we consider the setup of Fig. \ref{1dimuni} with attenuation factor ($\delta = 0.9$). Similar observations and conclusions as the previous figure can be made. This shows the flexibility of our framework for different environment with variable attenuation.

\begin{figure}[h]
\center
\includegraphics[scale=.42]{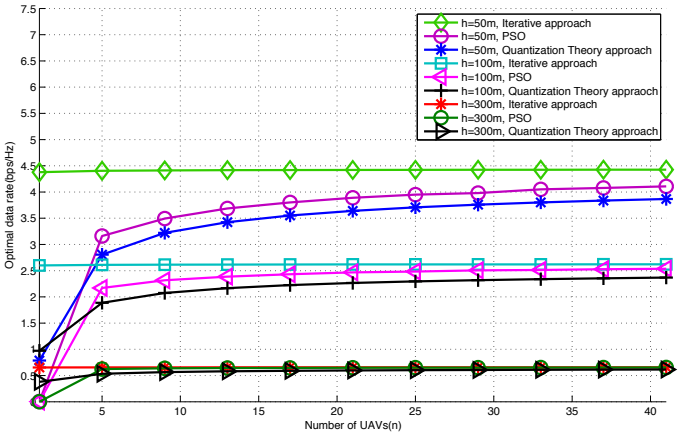}\vspace{-7pt}
\caption{Comparison of UAV deployment algorithms for a two-dimensional Gaussian density at different altitudes and $\delta = 0.5$.}
\label{twodimexample}
\end{figure}

In Fig. \ref{twodimexample}, we consider a two-dimensional Gaussian density with zero mean and covariance matrix $100\cdot\mathbf{I}$, where $\mathbf{I}$ is the identity matrix. Similar conclusions can be made as compared with the one dimensional examples: At high altitudes both the quantization theoretical and the iterative approaches provides a close approximation to the exact performance as provided the the PSO algorithm. At low altitudes, as the number of UAVs grow to infinity, the approximations again converge to the optimal performance. An interesting difference is that the quantization theoretical approach provides a better approximation than the iterative approach when the number of UAVs are small. A more precise theoretical analysis is needed to understand this phenomenon.

Consider now a time-varying IoT device density
$f_t(q) = (1+2|t|)(q - 2 + 2|t|)^{2|t|},\; q \in [2 - 2|t|, 3 -2|t|]$, with $5$ UAVs, where $t \in [-1, 1]$ represents the time index. At each time, we can optimize the UAV deployment to come up with the optimal UAV trajectories for the time interval $[-1,1]$. According to (\ref{xopt}), the optimal trajectory of UAV $i$ can be approximated as
\begin{equation}
\label{xoptnew}
 x_{i,t}^{\star} \simeq \Lambda^{\star}_{inv}\left(\frac{2i-1}{2n},f_t\right).
 \end{equation}
In order to calculate the optimal UAV trajectories, we need to first derive $\Lambda^{\star}_{inv}$ from (\ref{lambdaeq}). Using (\ref{optimalX}), we first obtain 
\begin{equation}
\lambda_{t}^{\star}(q,f) = (1+|t|)(q - 2 + 2|t|)^{|t|}
\end{equation}
Hence, $\Lambda^{\star}_{inv}$ can be calculated as
\begin{equation}
\Lambda^{\star}_{inv}(x,f_t) = 2-2|t| + x^{\frac{1}{1+|t|}}    
\end{equation}
Accordingly, the optimal trajectory of UAV $i$ can be approximated by
\begin{equation}
 \label{qttraj} x_{i,t}^{\star} \simeq  2-2|t| + \left(\frac{2i-1}{2n}\right)^{\frac{1}{1+|t|}}  
\end{equation}
 
 Fig. \ref{figtraject} illustrates the optimal trajectories provided by the PSO method, quantization theory (\ref{qttraj}),   and the iterative approach. The trajectories provided by the iterative approach and the quantization theory approach of (\ref{qttraj}) are almost the same. Both trajectories are slightly different than the trajectory provided by the PSO algorithm. These results show that for the asymptotic scenarios (high altitudes or large number of UAVs) where Theorems \ref{thm2} and \ref{thm3} become valid, we may use either the quantization theoretical or the iterative approach to calculate the optimal UAV deployments without great loss in performance. This way, we avoid running the computationally expensive PSO algorithm (or a similar globally optimal optimization algorithm) to solve the original problem in (\ref{objectivefuncDep}).
 
  \begin{figure}[h]
\center
\includegraphics[scale=.53]{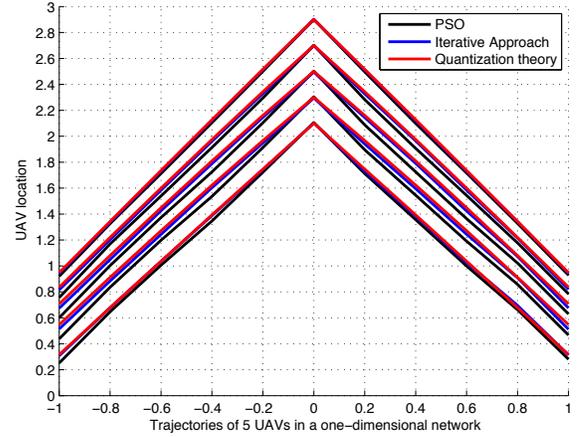}\vspace{-7pt}
\caption{Trajectories of 5 UAVs in a one-dimensional network.}
\label{figtraject}
\end{figure}


\section{Conclusion}
\label{sec5}
We have studied the optimal deployment of UAVs
serving as data collectors from time constrained IoT devices. Our
objective has been to maximize the collected data in an specified time by maximizing the communication data rate. We provided the optimal solution of IoT device-UAV association problem. Furthermore, we approximated the original non-convex problem with multiple convex problems and provided quantization theory based closed form solutions. We also proposed an iterative approach to solve the approximated problem. Finally, we compared the results of the proposed approaches with the results derived by solving the original non-convex problem. The simulation results shows the flexibility of proposed approaches for different practical scenarios.

\section*{Acknowledgement}
This work was supported in part by the NSF Award
CCF–1814717.

\appendices

\section{Proof of Theorem \ref{thm2}}
\label{appthm2}
We consider the following asymptotic expansions for different parts of the proof (the expansions are valid for $t \to 0$):
\begin{equation}
\label{555}
\mathrm{log}_2(a+t) = \mathrm{log}_2(a) + \frac{t}{\mathrm{log}(2)a }+ o\left(\frac{t}{\mathrm{log}(2)a }\right)
\end{equation}
\begin{equation}
\label{powerr}
(1+t)^{r} = 1 + rt + o(t)
\end{equation}
\begin{equation}
\label{tanlim}
 \mathrm{tan}^{-1}(1/t) = \pi/2 - t + o(t)
\end{equation}
\begin{equation}
\label{explim}
\frac{1}{1+c e^{bt}} = \frac{1}{c+1} - \frac{bct}{(c+1)^2} + o(t)
\end{equation}
We now proceed with the proof of the theorem. Let $d \triangleq \mathrm{min}_{i}\|x_i-q\|$ and $t \triangleq \frac{d}{h}$. In an optimal deployment, for large number of UAVs we have $d \simeq 0$. Therefore, for large number of UAVs and/or high altitudes, $t \simeq 0$ is a valid assumption. Having this assumption, the following is concluded from (\ref{powerr}):
\begin{equation}
\label{2323}
\frac{\gamma}{(d^2+h^2)^{\frac{r}{2}}} = \frac{\gamma}{h^r}\left(1-\frac{r d^2}{2h^2} + o\left(\frac{d^2}{h^2}\right)\right) \end{equation}

Using (\ref{2323}) and  (\ref{555}), we obtain
\begin{multline}
\label{444}
\mathrm{log}_2\left(1+\frac{\gamma}{(d^2+h^2)^{r/2}}\right) = \\ \mathrm{log}_2\left(1+\frac{\gamma}{h^{r}}\right) -\frac{r d^2}{2\mathrm{log}2 h^2(\gamma+h^r)} \!+\! o\left(\frac{r d^2}{h^2(\gamma+h^r)}\right) \end{multline}

Furthermore, according to (\ref{tanlim}), we have
\begin{equation}
P_{LOS}(d) = \frac{1}{1+c e ^{-b(\frac{\pi}{2} - \frac{d}{h} + o(\frac{d}{h}) -c)}},
\end{equation}
and, by (\ref{explim}), we obtain
\begin{equation}
\label{plosapprox}
P_{LOS}(d) = \frac{1}{1+c'} - \frac{b c' d}{h(1+c')^2} + o\left(\frac{d}{h}\right),
\end{equation}
where $c' = c e^{-b(\frac{\pi}{2}-c)}$.

Substituting (\ref{444}) and (\ref{plosapprox}) to (\ref{rate}), we have
\begin{multline}
\max_{i} R_{i}(q) = 
\log\left(1+\frac{\gamma}{h^{r}}\right) \frac{1}{1+c'} + \log_2\left(1+\frac{\gamma \delta}{h^{r}}\right) \frac{c'}{1+c'} \\ - \frac{bc'}{h(1+c')^2}\mathrm{log}\left(\frac{\gamma \delta+h^r}{\gamma + h^r}\right) d \!+\! \mathrm{log}\left(\frac{\gamma \delta\!+\!h^r}{\gamma \!+\! h^r}\right) o\left(\frac{d}{h}\right)\!\!\!
\end{multline}
Substituting the value of $d$ and averaging out the IoT device density, we obtain the theorem statement. 


\vspace{12pt}

\end{document}